\begin{document}
\title[CAPM with Size and Volatility]{Capital Asset Pricing Model with Size Factor and Normalizing by Volatility Index}
\author{Abraham Atsiwo, Andrey Sarantsev}
\address{University of Nevada, Reno; Department of Mathematics and Statistics}
\email{asarantsev@unr.edu}

\begin{abstract}
The Capital Asset Pricing Model (CAPM) relates a well-diversified stock portfolio to a benchmark portfolio. We insert size effect in the CAPM, capturing the observation that small stocks have higher risk and return than large stocks, on average. Our goal is to make the resulting linear regressions have independent identically distributed Gaussian residuals. In some cases, we find that including the Volatility Index as a multiplicative factor by these residuals makes them closer to that ideal. In this article, we combine these ideas to create a new discrete-time model, which includes volatility, relative size, and the CAPM. We fit this model using real-world data, prove the long-term stability, and connect this research to Stochastic Portfolio Theory. 
\end{abstract}

\subjclass[2020]{60G50, 62J05, 62M10, 62P05, 91G15}

\keywords{Capital Asset Pricing Model, stochastic volatility, ergodic Markov process, stationary distribution, size effect, autoregression, capital distribution curve}

\maketitle

\newtheorem{theorem}{Theorem}
\newtheorem{lemma}{Lemma}
\theoremstyle{definition}
\newtheorem{asmp}{Assumption}
\newtheorem{defn}{Definition}
\thispagestyle{empty}

\section{Introduction} Here, we briefly describe the parts of the model analyzed in this article. We remind the readers that {\it price returns} for a stock or a portfolio measure only price changes, ignoring dividends, while {\it total returns} include both price changes and dividends paid. Also, {\it equity premium} is computed as total returns minus risk-free returns (usually measured by short-term Treasury bills). We combine three main ideas in this article.  

{\it Idea 1: The Capital Asset Pricing Model.} We model a target stock portfolio (well-diversified) using a simple linear regression versus the benchmark stock portfolio. A typical example is the Standard \& Poor 500 (S\&P 500), a well-known index of large USA stocks. The classic measuring tool there is equity premia (for target and benchmark), but we can also apply this for price returns. The slope and intercept of this regression are called {\it beta} and {\it alpha}. 

{\it Idea 2: The Size Effect.} Small stocks, on average, have higher volatility but also higher returns than large stocks, see the classic articles \cite{Banz, FF1993}. This feature implies long-term stability: Well-diversified stock portfolios stay together and not split into several clouds. We combined these two ideas in the previous article by the second author \cite{CAPM-size}. This article is a sequel of that work. Such long-term stability is of interest in {\it Stochastic Portfolio Theory}, which constructs portfolios as functions of market weights, utilizing the observation that small stocks have higher risk and return.  

{\it Idea 3: The Volatility Index.} In another article \cite{VIX} by the second author, we observe that total monthly returns of the benchmark (closely tracking the S\&P 500) are not IID (independent identically distributed) Gaussian. However, dividing them by volatility makes them IID Gaussian. We stress this is applied to S\&P 500 not our regression residuals. Here, volatility is measured by the Volatility Index (VIX) monthly average. The volatility itself is modeled as an autoregression of order 1 on the logarithmic scale. Similar observation holds for price monthly returns of large stocks, and for the small stocks. 

In this article, we continue research of \cite{CAPM-size} by inserting Idea 3 in this setting, thus combining all three ideas. We create a new discrete-time model, and fit it using real market data from Kenneth French's data library and Federal Reserve Economic Data web site. We then state and prove long-term stability of this market model, and interpret this for Stochastic Portfolio Theory. Unlike \cite{CAPM-size}, we consider only discrete time models in this article. We fill a couple of important gaps left in the previous research \cite{CAPM-size}.

In \textsc{Section}~2, we provide a comprehensive motivation of proposed models and a literature review. As a motivation to use the volatility to normalize the returns and premia, we do a simple graphical analysis of real-world financial data. \textsc{Section}~3 is devoted to real-world financial data description, and statistical fit of these models. In \textsc{Section}~4, we state and prove long-term stability (ergodicity) result for this model: \textsc{Theorem}~\ref{thm:main}. It does not contain any simulations or empirical data analysis. In \textsc{Section}~5, we interpret our results in terms of Stochastic Portfolio Theory. We simulate capital distribution curve (ranked market weights vs ranks). We state and prove a rigorous long-term stability result for this curve in \textsc{Theorem}~\ref{thm:weights}. Next, \textsc{Theorem}~\ref{thm:normal-curve} contains results about its shape: We reduce it to normal order statistics and research from there. All data and code are available on \texttt{GitHub} repository \texttt{asarantsev/size-capm-vix}

\section{Background and Motivation}

\subsection{Capital Asset Pricing Model} This celebrated model, abbreviated as CAPM, compares returns of a stock portfolio with returns of the benchmark. This model was proposed by \cite{Lintner, Mossin, Sharpe}. The benchmark is usually taken to be the Standard \& Poor 500 for the American stock market. The model states that the only factor which matters for a well-diversified portfolio is {\it market exposure}, otherwise known by a standard term {\it beta} and denoted by $\beta$. The case $\beta = 0$ corresponds to the risk-free portfolio, with guaranteed deterministic return. This is usually measured using a benchmark of a short-term rate $r$, for example 1-month Treasury rate. The case $\beta = 1$ corresponds to the market portfolio (the benchmark). When $\beta \in (0, 1)$, the stock portfolio can be replicated by investing in a portfolio of risk-free bonds and the stock market benchmark, in proportions $1 - \beta$ and $\beta$, respectively. In other words, total returns $Q$ (including price changes and dividends) of this portfolio are related to total returns $Q_0$ of the benchmark:
\begin{equation}
\label{eq:CAPM-Q}
Q = (1-\beta)r + \beta Q_0.
\end{equation}
Equivalently, we can rewrite~\eqref{eq:CAPM-Q} in terms of {\it equity premia} $P = Q - r$ of the portfolio and $P_0 = Q_0 - r$ of the benchmark: 
\begin{equation}
\label{eq:CAPM}
P = \beta P_0.
\end{equation}
If $\beta > 1$, this~\eqref{eq:CAPM} still holds, and can be interpreted as shorting bonds and investing everything in the benchmark. We can treat $\beta$ as a risk measure: $\beta > 1$ means that the portfolio is riskier than the benchmark, and $\beta \in (0, 1)$ means the opposite. The case $\beta < 0$ does not usually happen in practice, see \cite[Chapter 7]{Ang}.

It is not considered a big achievement if a money manager improved returns by increasing $\beta$. In fact, often these managers are expected to maximize {\it excess return:} Total returns of a portfolio adjusted for market exposure. This quantity is denoted by $\alpha$ and, accordingly, is called {\it alpha}. These two Greek letters $\alpha$ and $\beta$ are standard notation in Finance. This methodology of market exposure and excess return is well-accepted by both finance academics and practitioners. One can include $\alpha$ into~\eqref{eq:CAPM} as
\begin{equation}
\label{eq:alpha-beta}
P = \alpha + \beta P_0 + \varepsilon,\quad \mathbb E\varepsilon = 0.
\end{equation}
We also include an error term $\varepsilon$, since the model might not hold almost surely. This makes~\eqref{eq:alpha-beta} a simple linear regression of $P$ upon $P_0$. 

Subsequent research disproved  the strong claims of CAPM that market exposure $\beta$ is the only risk measure and quantity of interest for a diversified stock portfolio, see \cite{FF1993, FF2004}, and \cite[Chapter 7]{Ang}. For one, there are systematic ways to generate excess return $\alpha$ by using several factors. Also, $\beta$ might be unstable in the long run. Still, $\beta$ is an established risk measure, accepted by financial theorists, analysts, and managers. The CAPM is useful as a benchmark model, a starting point for more complicated and real-life models. Calculating $\beta$ for mutual funds and exchange-traded funds is common practice. 

\subsection{Size and value} The most well-known and accepted factors are {\it size} (average market capitalization of portfolio stocks) and {\it value} (fundamentals such as earnings, dividends, or book value, compared to price). These are well-accepted by financial academic community and are considered useful by industry practitioners, to the extent that there are multiple size- and value-based funds traded alongside the S\&P 500 funds. See the classic articles \cite{Banz, FF1993} and a more recent article \cite{FF2015}.

Including a few factors (for example size and value) would enrich~\eqref{eq:alpha-beta}. In particular, size factor is related to $\beta$ as follows: Well-diversified portfolios of small stocks have equity premia $P$ closely correlated with that $P_0$ of large stock benchmark S\&P 500. This simple linear regression of $P$ vs $P_0$ has very large $R^2$ but $\beta$ which is slightly larger than $1$. This can be done, for example, with exchange-traded funds tracking small, mid, and large (=benchmark) stock indices, see \cite[Appendix]{CAPM-size}. The $\beta$ for mid-cap index is 1.15, and for small-cap index is 1.27. A natural idea then is to model the $\beta$ as a function of a portfolio size relative to benchmark. 

Additionally, we could model the $\alpha$ similarly, as a function of a portfolio size relative to benchmark. However, this might not be as important as for the $\beta$. Indeed, in \cite[Appendix]{CAPM-size} we have $\alpha$ is not significantly different from zero for both small-cap and mid-cap indices. See more on the size effect in \cite{Semenov}, and the discussion in \cite[Chapter 8]{Ang}.

\subsection{CAPM-based model with size as factor} Therefore, we developed the following model in \cite{CAPM-size}: Let $S$ = market capitalization of target portfolio, and $S_0$ = market capitalization of benchmark portfolio. Then the {\it relative size} (on the log scale) is defined as 
\begin{equation}
\label{eq:relsize}
C = \ln(S/S_0)
\end{equation}
For $C = 0$, the relative size is 0 (on the log scale) or 1 (on the absolute scale). This corresponds to the target portfolio having the same properties as the benchmark portfolio, with $\alpha = 0$ and $\beta = 1$. The simplest model is linear: For some coefficients $a, b$,
\begin{equation}
\label{eq:alpha-beta-functions}
\alpha = aC, \quad \beta = bC + 1
\end{equation}
In fact, in \cite{CAPM-size} we have more general conditions: $\alpha$ and $\beta$ are general functions of $C$. Here, we focus only on linear functions, see \cite[Example 1]{CAPM-size}. Analysis for small-cap and mid-cap funds above shows that $a \approx 0$ but $b > 0$. Rewrite~\eqref{eq:alpha-beta} as follows by plugging there~\eqref{eq:alpha-beta-functions}:
\begin{equation}
\label{eq:CAPM-size}
P = AC + \left(1 + BC\right)P_0 + \delta,
\end{equation}
where $\delta$ are IID regression residuals with mean zero. A good way to generalize~\eqref{eq:CAPM-size} is to make it dynamic: a time series. To this end, we make the model complete by writing an equation for $S, S_0, P_0$. Unlike total returns, which include dividends, price returns are computed only using price movements. The main idea is to take CAPM linear regression, and replace equity premia with price returns. We get
\begin{equation}
\label{eq:returns-size}
R = aC + \left(1 + bC\right)R_0 + \varepsilon,
\end{equation}
where $a, b$ are some  coefficients (not necessarily the same as the $A, B$ from~\eqref{eq:CAPM-size}), and $\varepsilon$ are IID regression residuals with mean zero (not necessarily the same as $\varepsilon$, but possibly correlated with these). We can interpret change in logarithm as price returns: 
\begin{align}
\label{eq:d-ln}
\begin{split}
R(t) &= \ln S(t+1) - \ln S(t) = \ln\frac{S(t+1)}{S(t)},\\ 
R_0(t) &= \ln S_0(t+1) - \ln S_0(t) = \ln\frac{S_0(t+1)}{S_0(t)}.
\end{split}
\end{align}
In real-life finance, small stocks grow, on average, faster than large stocks. Thus formerly small stocks might become large, and formerly large stocks might become small. The relative size of a stock portfolio exhibits mean-reversion. In our article \cite{CAPM-size}, we proved this for continuous time and more general systems than~\eqref{eq:CAPM-size},~\eqref{eq:returns-size}, under the assumptions that the benchmark follows a lognormal Samuelson (Black-Scholes) model of geometric random walk with growth rate $g$ and total returns $G$, and 
\begin{equation}
\label{eq:stability}
a + bg < 0.
\end{equation}
This is consistent with the observation made above that $a \approx 0$ and $b < 0$, since $g > 0$. Analysis of real-world finance data in \cite{CAPM-size} showed that $A \approx a$ and $B \approx b$. 

\subsection{Stochastic volatility} However, there is a drawback in our modeling from \cite{CAPM-size}. We fit regressions~\eqref{eq:CAPM-size} and~\eqref{eq:returns-size} using real-world monthly data taken from Kenneth French's Dartmouth College Financial Data Library. Residuals $\varepsilon, \delta$ are not IID. Absolute values are autocorrelated. This feature is also true for S\&P 500 monthly returns themselves, see our recent article \cite{VIX}. Also, these returns are not Gaussian. We wish to improve the fit of regressions~\eqref{eq:CAPM-size} and~\eqref{eq:returns-size} to make residuals closer to IID Gaussian. For S\&P 500 returns, we did this in \cite{VIX} as follows: We divided these returns by monthly average VIX: The S\&P 500 Volatility Index $V$ computed daily by the Chicago Board of Options Exchange. Our main idea of this article \cite{VIX} is (in our notation, see \cite[Subsection 2.2]{VIX}):
\begin{equation}
\label{eq:norm-return}
\frac{R_0(t)}{V(t)} \sim \mbox{IID Gaussian}.
\end{equation}
Similarly, it is reasonable to model normalized equity premia as IID Gaussian:
\begin{equation}
\label{eq:norm-premia}
\frac{P_0(t)}{V(t)} \sim \mbox{IID Gaussian}.
\end{equation}
We did not do this in \cite{VIX}, but we complete this work in this article. For the original equity premia of benchmark $P_0$, we plot in \textsc{Figure}~\ref{fig:premia} the quantile-quantile plot and the autocorrelation function for $P_0$ and $|P_0|$. For the normalized equity premia $P_0/V$, we plot these in \textsc{Figure}~\ref{fig:npremia}. We see that division by VIX is needed to model equity premia as in~\eqref{eq:norm-premia}. Data for $P_0$ is taken from Kenneth French's data library: Top 10\% decile, January 1986--December 2025 (exactly 40 years and 480 months). For short-term rate $r$, we use the 3-month Treasury rate from Federal Reserve Economic Data: start of month data. Data and Python code is available on \texttt{Github} repository: \texttt{asarantsev/size-capm-vix} see files \texttt{data.xlsx} and \texttt{equity-premia.py} We can see from \textsc{Figure}~\ref{fig:npremia} that it is reasonable to model $P_0/V$ as IID Gaussian. The same is true for $R_0/V$. 

\begin{figure}[t]
\subfloat[QQ]{\includegraphics[width = 5cm]{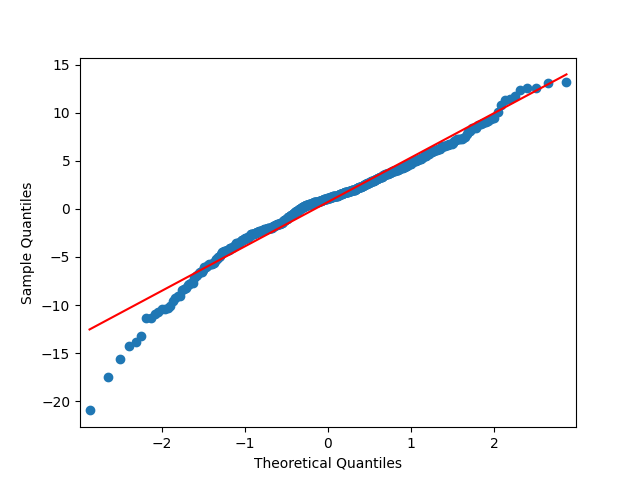}}
\subfloat[ACF]{\includegraphics[width = 5cm]{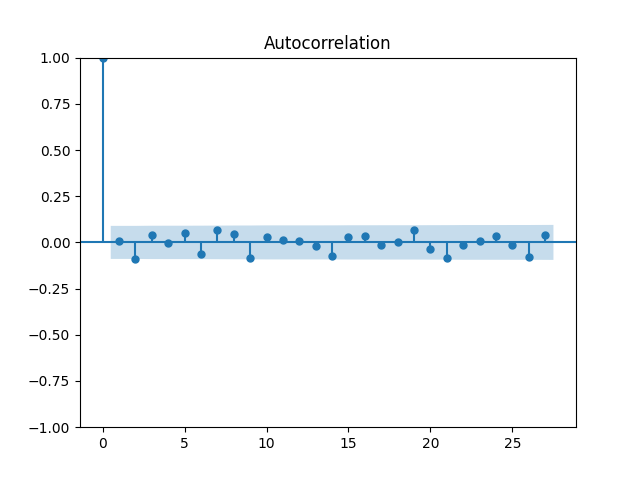}}
\subfloat[ACF]{\includegraphics[width = 5cm]{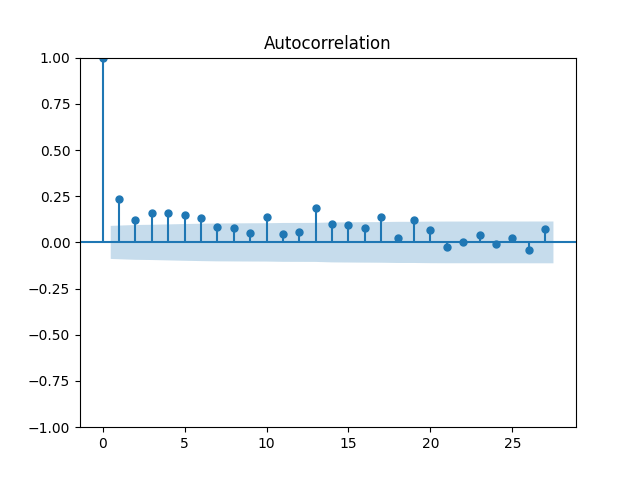}}
\caption{Empirical real-world data: The quantile-quantile (QQ) plot for equity premia $P_0$, and the empirical autocorrelation function (ACF) for $P_0$ and for $|P_0|$, given the monthly data 1986--2025 for equal-weighted portfolio of the top decile $P_0$.}
\label{fig:premia}
\end{figure}

\begin{figure}[t]
\subfloat[QQ]{\includegraphics[width = 5cm]{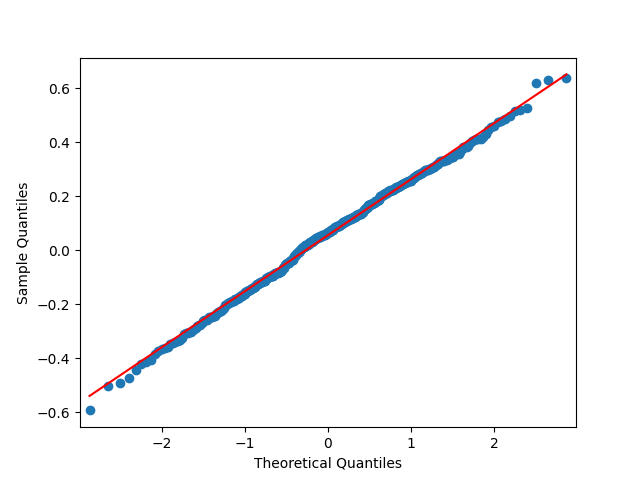}}
\subfloat[ACF]{\includegraphics[width = 5cm]{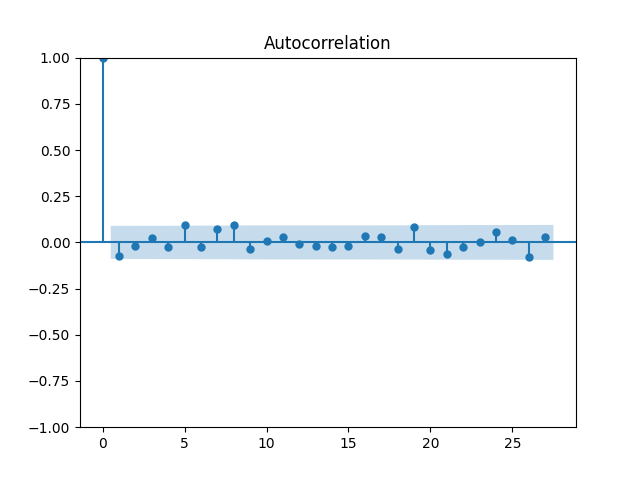}}
\subfloat[ACF]{\includegraphics[width = 5cm]{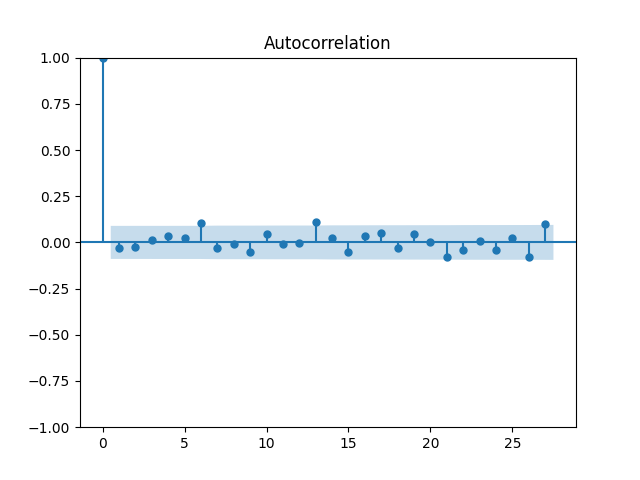}}
\caption{Empirical real-world data: The quantile-quantile (QQ) plot for equity premia $P_0/V$, and the empirical autocorrelation function (ACF) for $P_0/V$ and for $|P_0/V|$, given the monthly data 1986--2025 equal-weighted portfolio of the top decile $P_0$, normalized by monthly VIX $V$.}
\label{fig:npremia}
\end{figure}

The VIX itself is modeled by an autoregression of order 1 on the log scale, see \cite{VIX}:
\begin{equation}
\label{eq:ln-VIX}
\ln V(t) = \alpha + \beta \ln V(t-1) + W(t).
\end{equation}
Slightly abusing the notation, but following \cite{VIX}, in the rest of the article we use $\alpha$ and $\beta$ as the intercept and the slope of this autoregression~\eqref{eq:ln-VIX}, instead of excess return and market exposure from the CAPM. This model~\eqref{eq:ln-VIX} has good fit, as shown in \cite[Section 3]{VIX}. Innovations $W$ are IID but not Gaussian, with some finite exponential moments. The point estimate $\hat{\beta} = 0.88$, and we reject the unit root hypothesis $\beta = 1$. The methodology of \cite{VIX} stands in contrast to classic stochastic volatility models, see for example \cite{SV-book}, where the volatility $V$ is not observed directly and must be inferred from the data $R_0$ or $P_0$. 

\subsection{Capital distribution curve} A newly developed framework for portfolio management in Stochastic Portfolio Theory, see \cite{FernholzBook}. One question which concerns it is {\it model stability}: In a model of $N$ stocks, do they move together in the long run, or do they split into several subsets, moving away from each other in the long run? A related question is analysis of {\it market weights:} A market weight $\mu$ of a stock is its market capitalization (size) divided by the sum of all market capitalizations. Rank market weights at each time from top to bottom: 
$$
\mu_{(1)}(t) \ge \ldots \ge \mu_{(N)}(t).
$$
The plot of $N$ points
$$
(\ln n, \ln\mu_{(n)}(t)),\, n = 1, \ldots, N
$$
is called the {\it capital distribution curve}. For real-world markets, this curve is concave and straight at left upper end. See the picture in \cite[Chapter 4]{FernholzBook} for 1929, 1939, \ldots, 1999. If the model is stable, the capital distribution curve is also stable in the long run: see \textsc{Theorem}~\ref{thm:weights}. Previous research analyzed long-term stability and capital distribution curve for various continuous-time models: \cite{Chatterjee, Pal}. These models were designed to capture the observations that, on average, small stocks have higher volatility and growth rates than large stocks. In \cite{Chatterjee}, competing Brownian particle models were analyzed, where drift and diffusion coefficients depend on the rank of the stock, and linearity of the capital distribution curve was reproduced. In \cite{Pal}, the drift and diffusion depend on the market weight of a stock, but the capital distribution curve is not linear. These models do not use CAPM or VIX. 

In the previous article by the second coauthor \cite{CAPM-size}, devoted to the CAPM with size but without VIX (in other words, with constant volatility), we also reproduced model stability for its continuous-time version \cite[Theorem 2]{CAPM-size}. Using simulation, we reproduced the linearity of the capital distribution curve. But we did not state and prove rigorous results on this. A natural question is to reproduce these results for our model here. We accomplish both tasks in this article: (a) we simulate the capital distribution curve in Subsection 5.2, which reproduces its linearity; (b) we prove results on convergence to a Poisson point process. This fills a gap in \cite{CAPM-size}. 

\subsection{Our contributions} In this article, we combine the ideas of our articles \cite{CAPM-size} on CAPM and \cite{VIX} on VIX to state a reasonable generalization of \cite{CAPM-size}. This model can be {\it truncated}: Include only price changes for the benchmark $R_0$ and the target $R$ (or, equivalently, market size $S_0$ and $S$), and volatility $V$. Alternatively, it can be {\it completed}: Include also equity premia $P_0, P$ for the benchmark and the target. A truncated model contains 3 time series, and a completed model contains 5 time series. 

In \cite{CAPM-size}, we model~\eqref{eq:returns-size} and~\eqref{eq:CAPM-size} but with $P_0$ or $S_0$ IID Gaussian, in continuous time. The article \cite{CAPM-size} does not include VIX $V$. However, this article is concerned with the idea from \cite{VIX} that division by VIX normalized returns and premia. We replace $P$ and $P_0$ with $P/V$ and $P_0/V$ in~\eqref{eq:CAPM-size}. Also, we replace $R$ with $R/V$ and $R_0$ with $R_0/V$ in~\eqref{eq:returns-size}. We model $(R_0/V, P_0/V)$ as IID (but maybe not Gaussian), following~\eqref{eq:norm-return} and~\eqref{eq:norm-premia}. As mentioned in the Introduction, in \textsc{Section}~3 we perform the main fitting of linear regressions. 

We fill two lacunas left in our research \cite{CAPM-size}. The first lacuna is stability results for discrete time. In \textsc{Section}~4 of this article, we state and prove this for the case of stochastic volatility. The second lacuna is rigorous results for the capital distribution curve, which are lacking in \cite{CAPM-size}. In \textsc{Section}~5 of this article, we state and prove a convergence result for the capital distribution curve: \textsc{Theorem}~\ref{thm:weights}. Next, in \textsc{Theorem}~\ref{thm:normal-curve}, we show a remarkable result: the curve is 
$$
(\ln n, X_{(n)}),\quad n = 1, \ldots, N,
$$
where $X_{(1)} \ge \ldots \ge X_{(N)}$ are order statistics of a conditional normal sample: 
$$
X_1, \ldots, X_N \mid \mu, \sigma \sim \mathcal N(\mu, \sigma^2)
$$
for random $\mu$ and $\sigma$. We further analyze this curve for the standard normal $X_1, \ldots, X_N$, and show its upper left and lower right ends replicate real-world behavior. 

\section{Financial Data and Statistical Analysis}

\subsection{Data description} We take monthly data January 1986--December 2025. This gives us exactly 40 years and 480 data points. As discussed in the Introduction and Background sections, we measure volatility $V$ as Chicago Board of Options Exchange VIX: The Volatility Index for S\&P 500, monthly average data. This data is taken from Federal Reserve Economic Data (FRED) web site. For equity premia computation, we need short-term Treasury rates, which are also taken from the FRED web site. 

The rest of the data is taken from Kenneth French's Dartmouth College Financial Data Library. It contains equally-weighted portfolios of stocks split into 10 deciles by size: Decile 1 has smallest 10\% stocks by size (market capitalization), Decile 2 has the next 10\% smallest stocks, etc up to Decile 10, which contains top 10\% largest stocks. We use Decile 10 as the benchmark. We stress that this equally-weighted portfolio of Decile 10 is different from the classic benchmark S\&P 500. These portfolios are reconstituted at the end of June of each year. For each decile and each month, the data contains average market size, price returns (excluding dividends), and total returns (including dividends). The data are available on the \texttt{GitHub} repository \texttt{asarantsev/size-capm-vix} in the file \texttt{data-new.xlsx}

\subsection{Regression description} Our goal is to fit~\eqref{eq:returns-size} and~\eqref{eq:CAPM-size}. We rewrite these as
$$
\frac{R(t) - R_0(t)}{V(t)} = aC(t) + bC(t)\frac{R_0(t)}{V(t)} + \varepsilon(t).
$$
$$
\frac{P(t) - P_0(t)}{V(t)} = AC(t) + BC(t)\frac{P_0(t)}{V(t)} + \delta(t).
$$
These linear regressions do not have an intercept, however. To make the models complete, we add intercepts:
\begin{equation}
\label{eq:returns-reg}
\frac{R(t) - R_0(t)}{V(t)} = aC(t) + bC(t)\frac{R_0(t)}{V(t)} + m + \varepsilon(t).
\end{equation}
\begin{equation}
\label{eq:premia-reg}
\frac{P(t) - P_0(t)}{V(t)} = AC(t) + BC(t)\frac{P_0(t)}{V(t)} + M + \delta(t).
\end{equation}
For the benchmark with price returns $R_0$ and equity premia $P_0$, we take Decile 10. For the target with price returns $R$ and equity premia $P$, we use Deciles 1, \ldots, 9. Each decile is fit separately. See results in \textsc{Table}~\ref{table:return} and  \textsc{Table}~\ref{table:premia}. The Python code is available on the \texttt{GitHub} repository \texttt{asarantsev/size-capm-vix} in the file \texttt{regression-fit.py}

\begin{table}
\begin{tabular}{|c|c|c|c|c|c|c|c|c|c|c|}
\hline
Decile & $m$ & $T_{m}$ & $a$ & $T_a$ & $b$ & $T_b$ & $s$ & $R^2$ & LB & SW\\
\hline
1 & -.12 & -0.64 & .021 & 0.75 & -.025 & -3 & .25 & 2\% & 0.5\% & 0.0\%\\
\hline
2 & -.37 & -2.57 & .072 & 2.57 & .012 & 1.3 & .22 & 1.7\% & 32\% & 0.0\%\\
\hline
3 & -.21 & -1.64 & .047 & 1.68 & .020 & 2.16 & .19 & 1.5\% & 64\% & 1.3\%\\
\hline
4 & -.25 & -2.08 & .06 & 2.09 & .02 & 2.17 & .18 & 1.8\% & 27\% & 0.2\%\\
\hline
5 & -.17 & -1.77 & .046 & 1.78 & .033 & 3.56 & .16 & 3.1\% & 54\% & 0.7\%\\
\hline
6 & -.027 & -.41 & .009 & .46 & .025 & 2.92 & .13 & 1.8\% & 63\% & 46\%\\
\hline
7 & -.086 & -1.78 & .032 & 1.88 &  .027 & 3.2 & .11 & 2.8\% & 90\% & 16\%\\
\hline
8 & -.055 & -1.48 & .025 & 1.54 & .03 & 3.46 & .09 & 2.9\% & 15\% & 14\%\\
\hline
9 & -.02 & -.82 & .015 & .93 & .01 & 1.11 & .07 & 0.4\% & 22\% & 1.6\%\\
\hline
\end{tabular}
\bigskip
\caption{Results of ordinary least squares fit for the regression~\eqref{eq:returns-reg}. Point estimates of parameters $m, a, b$, the corresponding Student $T$-values, the standard error $s$ of residuals $\varepsilon$, the $R^2$ of regression, and the $p$-values for two tests for residuals: the Ljung-Box (LB) white noise test with 10 lags, and the Shapiro-Wilk (SW) normality test.}
\label{table:return}
\end{table}

\begin{table}
\begin{tabular}{|c|c|c|c|c|c|c|c|c|c|c|}
\hline
Decile & $M$ & $T_M$ & $A$ & $T_A$ & $B$ & $T_B$ & $s$  & $R^2$ & LB & SW\\
\hline
1 & -.15 & -.78 & .024 & .85 & -.025 & -3.1 & .25 & 2\% & 0.5\% & 0.0\%\\
\hline
2 & -.4 & -2.8 & .077 & 2.75 & .012 & 1.3 & .22 & 1.9\% & 30\% & 0.0\%\\
\hline
3 & -.24 & -1.86 & .053 & 1.86 & .020 & 2.18 & .19 & 1.7\% & 62\% & 1.6\%\\
\hline
4 & -.27 & -2.3 & .066 & 2.28 & .02 & 2.2 & .17 & 2.0\% & 24\% & 0.2\% \\
\hline
5 & -.19 & -2 & .051 & 2 & .033 & 3.59 & .16 & 3.3\% & 51\% & 0.7\%\\
\hline
6 & -.044 & -.68 & .014 & .69 & .026 & 2.94 & .13 & 1.9\% & 64\% & 45\%\\
\hline
7 & -.093 & -1.9 & .034 & 1.96 & .026 & 3.13 & .11 & 2.8\% & 86\% & 12\%\\
\hline
8 & -0.057 & -1.55 & .025 & 1.56 & .03 & 3.53 & .09 & 3\% & 19\% & 11\%\\
\hline
9 & -.02 & -.79 & .014 & .87 & .01 & 1.12 & .07& 0.4\% & 17\% & 1\% \\
\hline
\end{tabular}
\bigskip
\caption{Results of ordinary least squares fit for the regression~\eqref{eq:premia-reg}. Point estimates of parameters $M, A, B$, the corresponding Student $T$-values, the standard error $s$ of residuals $\delta$, the $R^2$ of regression, and the $p$-values for two tests for residuals: the Ljung-Box (LB) white noise test with 10 lags, and the Shapiro-Wilk (SW) normality test.}
\label{table:premia}
\end{table}

\subsection{Regression results} Choosing the standard 5\% threshold for statistical significance (and ignoring for the moment the problem of multiple simultaneous testing), we see that residuals can be modeled as IID Gaussian for deciles 6--8. Residuals pass the white noise test for all but the first decile. This holds for both price returns and equity premia. 

\section{Long-Term Stability} 

\subsection{Formal construction of the model} Take a sequence of five-dimensional vectors:
\begin{equation}
\label{eq:innovations}
\mathbf{Y}(t) := (W(t), Z(t), U(t), \delta(t), \varepsilon(t)),\quad t = 1, 2, \ldots
\end{equation}

\begin{asmp}
Vectors $\mathbf{Y}(t)$ are IID with mean zero and finite second moment, with Lebesgue joint density on $\mathbb R^5$ which is everywhere strictly positive. 
\label{asmp:innovations}
\end{asmp}

These five components might be correlated between themselves. First, we model $V$ using~\eqref{eq:ln-VIX} with innovations $W$. Next, we model $R_0$ following~\eqref{eq:norm-return} for some constant $g \in \mathbb R$: 
\begin{equation}
\label{eq:n-return}
\frac{R_0(t)}{V(t)} = g + U(t),
\end{equation}
but we do not necessarily assume $U$ is Gaussian. We add $g$ because Assumption~\ref{asmp:innovations} states that 
$\mathbb E[U(t)] = 0$, but the left-hand side of~\eqref{eq:n-return} has nonzero mean. Similarly, we model $P_0$ following~\eqref{eq:norm-premia} for another constant $G \in \mathbb R$:
\begin{equation}
\label{eq:n-premia}
\frac{P_0(t)}{V(t)} = G + Z(t),
\end{equation}
and $Z$ might not be Gaussian. These two equations~\eqref{eq:n-return} and~\eqref{eq:n-premia} model the benchmark. Next, we combine~\eqref{eq:CAPM-size} and~\eqref{eq:returns-size} with these new ideas, following the outline in \textsc{Section}~2. We can replace $P_0$ and $P$ with $P_0/V$ and $P/V$ in~\eqref{eq:CAPM-size}:
\begin{equation}
\label{eq:premia-main}
\frac{P(t)}{V(t)} = M + AC(t) + (1 + BC(t))\frac{P_0(t)}{V(t)} + \delta(t).
\end{equation}
Similarly, we can replace $R_0$ and $R$ with $R_0/V$ and $R/V$ in~\eqref{eq:returns-size}:
\begin{equation}
\label{eq:return-main}
\frac{R(t)}{V(t)} = m + aC(t) + (1 + bC(t))\frac{R_0(t)}{V(t)} + \varepsilon(t).
\end{equation}
Recall the definition of the {\it relative size} process 
\begin{equation}
\label{eq:rel-size}
C(t) = \ln\frac{S(t)}{S_0(t)}.
\end{equation} 

\subsection{Stability results} Consider a discrete-time process $\mathbf{X} = (\mathbf{X}(0), \mathbf{X}(1), \ldots)$ in $\mathbb R^d$.

\begin{defn}
This process $\mathbf{X}$ is called {\it time-homogeneous Markov} if there exists a {\it transition function} $\mathcal Q : \mathbb R^d \times \mathcal B \to [0, 1]$ (where $\mathcal B$ is the Borel $\sigma$-algebra on $\mathbb R^d$) such that for all $t = 1, 2, \ldots$, $\bf{x}_0, \bf{x}_1, \ldots, \bf{x}_{t-1} \in \mathbb R^d$, and $A \in \mathcal B$, we have:
$$
\mathbb P(\mathbf{X}(t) \in A\mid \mathbf{X}(0) = \bf{x}_0, \mathbf{X}(1) = \bf{x}_1, \ldots, \mathbf{X}(t-1) = \bf{x}_{t-1}) = \mathcal Q(\bf{x}_{t-1}, A).
$$
\end{defn}

\begin{lemma} Under Assumption~\ref{asmp:innovations}, the process $C$ is Markov. Also, the truncated model $(V, R_0, R)$ is Markov. Finally, the completed model $(V, R_0, R, P_0, P)$ is Markov. 
\end{lemma}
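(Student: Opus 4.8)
The plan is to realize each of the three processes as a deterministic, measurable recursion driven by the IID innovation vectors $\mathbf Y(t)$, and then to invoke the elementary criterion that if $\mathbf X(t+1) = \mathbf F(\mathbf X(t), \mathbf Y(t+1))$ for a measurable map $\mathbf F$ with $\mathbf Y(t+1)$ independent of $(\mathbf X(0), \ldots, \mathbf X(t))$, then $\mathbf X$ is a time-homogeneous Markov chain with transition function $\mathcal Q(\mathbf x, A) = \mathbb P(\mathbf F(\mathbf x, \mathbf Y(1)) \in A)$. First I would read off the one-step maps from the model equations: by~\eqref{eq:ln-VIX}, $V(t)$ is a measurable function of $V(t-1)$ and $W(t)$, so $V$ is an autonomous Markov sub-chain; by~\eqref{eq:n-return} and~\eqref{eq:n-premia}, $R_0(t)$ and $P_0(t)$ are functions of $V(t)$ together with $U(t)$ and $Z(t)$ respectively; subtracting the two identities in~\eqref{eq:d-ln} and using~\eqref{eq:rel-size} gives $C(t+1) = C(t) + R(t) - R_0(t)$, which combined with~\eqref{eq:return-main} and~\eqref{eq:n-return} writes $C(t+1)$ as a function of $C(t)$, $V(t)$, $U(t)$, $\varepsilon(t)$; and~\eqref{eq:premia-main} writes $P(t)$ as a function of $C(t)$, $V(t)$, $Z(t)$, $\delta(t)$. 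Equivalently, since $R(t) = \ln S(t+1) - \ln S(t)$ and $R_0(t) = \ln S_0(t+1) - \ln S_0(t)$, the sizes $S(t+1)$ and $S_0(t+1)$ are measurable functions of $S(t)$, $S_0(t)$, $V(t)$ and the innovations.

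Next I would assemble the state vectors. For the first assertion I would take the pair $(V, C)$ as the Markov state, $V$ being the autonomous sub-chain that drives $C$; for the truncated model I would augment this to the equivalent state $(V, S_0, S)$, from which $C = \ln(S/S_0)$, and hence the next values of $R_0$ and $R$, are recovered; for the completed model I would further append $P_0$ and $P$, which are contemporaneous read-offs of $(V, C)$ and fresh noise, so that they do not feed back but can still be carried along without breaking the Markov property. The one point that needs care is the time index: the innovation $\mathbf Y(t)$ enters $V(t)$ and the time-$t$ quantities $R(t), R_0(t), P(t), P_0(t)$ simultaneously, so to obtain a genuine one-step recursion driven only by $\mathbf Y(t+1)$ I would let the state recorded at step $t$ consist of $V(t)$ together with the quantities carrying index $t+1$ — namely $C(t+1)$, respectively $S_0(t+1), S(t+1)$, and the premia $P_0(t), P(t)$ — so that the passage from step $t$ to step $t+1$ becomes a bona fide function of the stored state and of $\mathbf Y(t+1)$.

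Finally I would verify the independence hypothesis: unwinding the recursions shows that $\mathbf X(0), \ldots, \mathbf X(t)$ are measurable functions of the deterministic initial data and of $\mathbf Y(1), \ldots, \mathbf Y(t)$ only, so by Assumption~\ref{asmp:innovations} the vector $\mathbf Y(t+1)$ is independent of $(\mathbf X(0), \ldots, \mathbf X(t))$, and the criterion stated above yields the Markov property together with the explicit transition function. I expect the only real difficulty to lie in this bookkeeping — choosing the state rich enough, in particular so that the contemporaneous relative size $C$ (equivalently the sizes $S$ and $S_0$) is a coordinate of the state rather than a path functional of past returns, and aligning the innovation indices correctly. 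No genuinely analytic step is involved: everything reduces to measurability of explicit elementary functions.
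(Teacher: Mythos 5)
Your strategy coincides with the paper's: express every series as a stochastic recursion driven by the IID vectors $\mathbf Y(t)$ --- in particular you arrive at exactly the recursion \eqref{eq:C-main}, $C(t+1) = (1 + aV(t) + bR_0(t))C(t) + V(t)(m + \varepsilon(t))$ --- and then apply the elementary criterion for recursions $\mathbf X(t+1) = \mathbf F(\mathbf X(t), \mathbf Y(t+1))$ with fresh independent noise; this is the same route as the paper, only written out with the measurability and independence checks made explicit. The one substantive difference is the choice of state. You keep $V$ (and, after shifting indices, $C(t+1)$, respectively $S(t+1), S_0(t+1)$, together with $P_0(t), P(t)$) inside the state, so what you actually prove is Markovianity of these augmented, re-indexed vectors; the lemma's first assertion, that $C$ by itself is Markov, is not established by your argument, and the claimed Markov property of the aligned vectors $(V,R_0,R)$ and $(V,R_0,R,P_0,P)$ is obtained by you only after your re-indexing.

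That said, your caution is justified rather than a defect. In \eqref{eq:C-main} the coefficient $1 + aV(t) + bR_0(t)$ is driven by the autoregression \eqref{eq:ln-VIX}, hence is autocorrelated, so the past of $C$ carries information about $V(t)$ beyond what $C(t)$ does; Markovianity of $C$ alone does not follow from Assumption~\ref{asmp:innovations}, contrary to what the paper's one-line justification suggests (it is unproblematic only when $V$ is constant, as in Subsection 4.4). Your index bookkeeping also addresses two points the paper passes over: the vector $\mathbf Y(t)$, whose coordinates may be mutually dependent, drives both the contemporaneous quantities $V(t), R_0(t)$ and the step $C(t)\to C(t+1)$, so the aligned triple $(V(t), R_0(t), C(t))$ need not be Markov without your re-timing or an extra independence assumption; and $R(t)$ is not a measurable function of $(V(t), R_0(t), C(t))$, nor $C(t)$ of $(V(t), R_0(t), R(t))$, so the paper's ``or, equivalently'' step deserves exactly the care you give it. In short, your proposal is a correct and in fact safer rendering of the intended argument; to match the lemma verbatim you would need either to assume independence of the coordinates of $\mathbf Y(t)$ together with a re-timing of the noise, or, preferably, to restate the conclusion for the augmented state vectors you construct.
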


\begin{proof} It is clear from~\eqref{eq:ln-VIX} that the process $\ln V$ (and therefore $V$) is Markov. Together with~\eqref{eq:n-return}, this shows that $(V, R_0)$ is Markov. From definition~\eqref{eq:rel-size}, we write 
\begin{align}
\label{eq:dc}
\begin{split}
C(t+1) - C(t) &= \ln\frac{S(t+1)}{S_0(t+1)} - \ln\frac{S(t)}{S_0(t)} \\ & = \ln\frac{S(t+1)}{S(t)} - \ln\frac{S_0(t+1)}{S_0(t)} = R(t) - R_0(t).
\end{split}
\end{align}
Using~\eqref{eq:dc}, we rearrange~\eqref{eq:return-main} as follows:
\begin{equation}
\label{eq:C-main}
C(t+1) = (1 + aV(t) + bR_0(t))C(t) + V(t)(m + \varepsilon(t)).
\end{equation}
But from Assumption~\ref{asmp:innovations}, this process $C$ from~\eqref{eq:C-main} is Markov, too. From~\eqref{eq:dc} and~\eqref{eq:n-return}, this equation~\eqref{eq:C-main} shows that $(V, R_0, C)$, or, equivalently, $(V, R_0, R)$, is Markov. Finally, from~\eqref{eq:premia-main} and~\eqref{eq:n-premia}, we get that the completed model is Markov. 
\end{proof}

\begin{defn} A time-homogeneous Markov process $\mathbf{X}$ has a {\it stationary distribution} $\pi$ if $\pi$ is a probability measure on 
$\mathbb R^d$, and from $\mathbf{X}(0) \sim \pi$ it follows that $\mathbf{X}(1) \sim \pi$ (and therefore $\mathbf{X}(t) \sim \pi$ for all $t$). Equivalently, in terms of transition function $\mathcal Q$: For every $A \in \mathcal B$, 
$$
\int_{\mathbb R^d}\mathcal Q(x, A)\pi(\mathrm{d}x) = \pi(A).
$$
\end{defn}

\begin{defn} A time-homogeneous Markov process $\mathbf{X}$ is {\it ergodic} if it has a unique stationary distribution $\pi$, and for every $x \in \mathbb R^d$, we have:
$$
\sup\limits_{A \subseteq \mathbb R^d}\Bigl|\mathbb P(\mathbf{X}(t) \in A\mid \mathbf{X}(0) = x) - \pi(A)\Bigr| \to 0,\quad t \to \infty.
$$
\end{defn}

\begin{asmp} We have: $\beta \in (0, 1)$, and for stationary versions of $(V(t), R_0(t))$,
\begin{equation}
\label{eq:log-negative}
\mathbb E\ln|1 + aV(t) + bR_0(t)| < 0.
\end{equation}
\label{asmp:stability}
\end{asmp}

Below, we show that, if $\beta \in (0, 1)$, these stationary versions $(V(t), R_0(t))$ exist, and Assumption~\ref{asmp:stability} is well-defined. The following is the main result of this article.

\begin{theorem}
Consider the relative size process $C$, the truncated model $(V, R_0, R)$, and the completed model $(V, R_0, R, P_0, P)$. Under Assumptions~\ref{asmp:innovations},~\ref{asmp:stability}, each of these models is ergodic. 
\label{thm:main} 
\end{theorem}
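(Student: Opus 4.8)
The plan is to establish ergodicity in stages, building up from the VIX process to the relative size process and finally to the full truncated and completed models. First I would analyze the autoregression~\eqref{eq:ln-VIX} for $\ln V$: since $\beta \in (0,1)$ and the innovations $W$ are IID with finite second moment (indeed finite exponential moments) and a strictly positive density, the classical theory of scalar linear autoregressions gives a unique stationary distribution and geometric ergodicity for $\ln V$, hence for $V = e^{\ln V}$. One can verify a Lyapunov (Foster) drift condition with test function $x \mapsto |x|$ or $x \mapsto x^2$ on $\ln V$, and the strictly positive density yields the minorization/irreducibility needed for Harris recurrence. Because $R_0(t) = V(t)(g + U(t))$ by~\eqref{eq:n-return}, the pair $(V, R_0)$ is a deterministic-plus-noise function of the Markov chain $(\ln V(t))$ together with the independent innovation $U(t)$, so $(V, R_0)$ inherits a unique stationary distribution. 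This justifies that the stationary version of $(V(t), R_0(t))$ in Assumption~\ref{asmp:stability} exists and~\eqref{eq:log-negative} is well-defined; the finiteness of the expectation there follows from $\mathbb E\ln|1 + aV + bR_0| \le \mathbb E\ln(1 + |a|V + |b||R_0|) \le \mathbb E(|a|V + |b||R_0|) < \infty$ using $\ln(1+x)\le x$.

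Next I would handle the relative size process $C$, which by~\eqref{eq:C-main} satisfies a random linear recursion
\[
C(t+1) = A(t) C(t) + \xi(t), \qquad A(t) := 1 + aV(t) + bR_0(t), \quad \xi(t) := V(t)(m + \varepsilon(t)),
\]
where the modulating sequence $(V(t), R_0(t), \varepsilon(t))$ is itself driven by an ergodic Markov chain. This is a Markov-modulated (or iterated-random-function) affine recursion. The standard route — going back to Vervaat and to Brandt, and systematized for iterated random functions — is to show that the top Lyapunov exponent is negative, which here is precisely the content of~\eqref{eq:log-negative}: $\mathbb E \ln|A(t)| < 0$ under the stationary law. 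Combined with a logarithmic moment bound $\mathbb E\ln^+|\xi(t)| < \infty$ (immediate from $\mathbb E|\xi(t)| \le |m|\,\mathbb E V + \mathbb E(V|\varepsilon|) < \infty$), this gives almost sure convergence of the backward iteration and hence a unique stationary distribution for the joint chain $(V(t), R_0(t), C(t))$. For the ergodicity (total-variation convergence from every starting point) rather than mere existence of a stationary law, I would instead verify a Foster–Lyapunov drift condition directly on the joint chain: take $\mathcal V(v, r_0, c) = W_0(\ln v) + |c|$ for a suitable Lyapunov function $W_0$ of the VIX component, and use that, conditionally on the VIX block, $\mathbb E[\ln|A(t)|] < 0$ forces geometric contraction of the $|c|$ term outside a compact set, while Assumption~\ref{asmp:innovations} (everywhere-positive joint density) gives the minorization on compacts needed to invoke the Meyn–Tweedie theorem for $e$-chains / Harris ergodic chains. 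The subtlety is that the one-step multiplier $A(t)$ can exceed $1$ in absolute value; the contraction is only in a time-averaged sense, so the drift inequality must be set up over a block of steps or with the stationary VIX marginal, which is a standard but slightly delicate maneuver.

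Finally, the truncated model $(V, R_0, R)$ is just a measurable bijective re-encoding of $(V, R_0, C)$ via $R(t) = R_0(t) + (C(t+1) - C(t))$ together with~\eqref{eq:return-main}, so ergodicity transfers verbatim. For the completed model $(V, P_0, P)$ — or $(V, R_0, R, P_0, P)$ — observe from~\eqref{eq:n-premia} that $P_0(t) = V(t)(G + Z(t))$ is a function of the VIX chain and a fresh innovation, and from~\eqref{eq:premia-main} that $P(t)$ is a function of $V(t), C(t), P_0(t), \delta(t)$; thus the completed model is obtained from the ergodic chain $(V, R_0, C)$ by appending conditionally-IID coordinates, which preserves the unique stationary distribution and total-variation convergence. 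I would note that the positivity of the joint density of $\mathbf{Y}(t)$ is what guarantees the $\psi$-irreducibility and aperiodicity throughout, so Assumption~\ref{asmp:innovations} is used in an essential way at every stage.

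I expect the main obstacle to be the drift argument for $C$: because the affine multiplier $A(t) = 1 + aV(t) + bR_0(t)$ is unbounded (it scales with $V$, which has only exponential moments) and is not almost surely below $1$ in modulus, a naive one-step Foster–Lyapunov inequality fails, and one must either pass to a multi-step drift condition, work with the Lyapunov-exponent / iterated-random-function machinery and then separately upgrade to total-variation ergodicity, or carefully average over the stationary VIX block. Making this rigorous while keeping track of the joint (rather than just marginal) Markov structure — so that the conclusion is genuinely about the Markov chain $(V, R_0, C)$ and not merely the $C$-marginal — is the technical heart of the proof.
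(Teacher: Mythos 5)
Your overall architecture matches the paper's: the stationarity part is exactly the paper's route, namely treating $C$ via the affine recursion $C(t+1)=A(t)C(t)+\xi(t)$ with stationary coefficients and invoking Brandt's theorem under $\mathbb E\ln|A(t)|<0$ together with the logarithmic moment bound $\mathbb E\ln^+|\xi(t)|<\infty$, after first noting that $\beta\in(0,1)$ gives a stationary version of $(V,R_0)$ so that Assumption~\ref{asmp:stability} is well-posed; the transfer to the truncated and completed models by appending $R_0=V(g+U)$, $P_0=V(G+Z)$ and the regression equations is also the same. Where you diverge is the upgrade from stationarity to total-variation ergodicity. You propose a Foster--Lyapunov drift condition on the joint chain and correctly identify that a one-step drift fails because $A(t)$ is unbounded and not almost surely contracting, so you would need multi-step or averaged drift arguments --- but this whole difficulty is unnecessary, and as written that branch of your plan is never actually carried out. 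The paper's Step~4 avoids it entirely: by Assumption~\ref{asmp:innovations} the one-step transition kernel is the push-forward of an everywhere-positive Lebesgue density under a smooth bijection, hence $\mathcal Q(\mathbf x,\cdot)$ is mutually absolutely continuous with Lebesgue measure for every $\mathbf x$; this gives $\psi$-irreducibility and aperiodicity, and an irreducible aperiodic chain that already possesses an invariant probability measure (from Brandt) is positive Harris recurrent, so \cite[Theorem 13.0.1]{MT-book} yields $\sup_A|\mathbb P(\mathbf X(t)\in A\mid \mathbf X(0)=x)-\pi(A)|\to 0$ with no drift inequality at all. Your drift route, if completed, would buy more (e.g.\ geometric rates), but for the theorem as stated the soft argument you mention only in passing (``work with the iterated-random-function machinery and then separately upgrade to total-variation ergodicity'') is precisely the paper's proof, and it is the cleaner way to close the step you flag as the technical heart. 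One further small caution: your claim that $\mathbb E\ln|1+aV+bR_0|$ is well-defined via $\mathbb E(|a|V+|b||R_0|)<\infty$ silently uses finite first moments of the stationary $V$ and $R_0$, which require exponential moments of $W$ (cf.\ Lemma~\ref{lemma:moments}), not just the second-moment condition of Assumption~\ref{asmp:innovations}.
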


\begin{proof} {\it Step 1.} We use the main result of \cite{Brandt}: If $\{(A_n, B_n),\, n = 1, 2, \ldots\}$ is stationary, and 
\begin{equation}
\label{eq:Brandt}
\mathbb E\ln|A_n| < 0,\quad \mathbb E\max(\ln|B_n|, 0) < \infty,
\end{equation}
then there exists a stationary version of the process $\{X_n,\, n = 1, 2, \ldots\}$ which satisfies the autoregressive-type equation
\begin{equation}
\label{eq:ar-brandt}
X_{n+1} = A_nX_n + B_n,\, n = 1, 2, \ldots
\end{equation}

 {\it Step 2.} Under Assumption~\ref{asmp:innovations} and $\beta \in (0, 1)$, let us show that the process $V$ has a unique stationary distribution. Indeed, we let $A_n = \beta$ and $B_n = \alpha + W(n)$ in~\eqref{eq:ar-brandt}, then $X_n = \ln V(n)$. Check the conditions~\eqref{eq:Brandt}. It is easy to check that for some $\kappa > 0$ an elementary inequality holds:
\begin{equation}
\label{eq:elementary}
\max(\ln|x|, 0) \le \kappa x^2,\quad x \in \mathbb R.
\end{equation}
From $\mathbb E[W^2(n)] < \infty$, it follows that $\mathbb E[(W(n) + \alpha)^2] < \infty$. Therefore, applying~\eqref{eq:elementary} to the latter inequality, we get: 
$\mathbb E[\max(\ln|\alpha + W(n)|, 0)] < \infty$. We conclude that $\ln V$ has a stationary distribution, and therefore $V$ does as well. Indeed, the function $\ln : (0, \infty) \to \mathbb R$ is one-on-one and continuous in both sides. Such functions preserve stationary distributions and long-term convergence in total variation (which implies ergodicity) for Markov processes. 

{\it Step 3.} Under~\eqref{eq:n-return} and~\eqref{eq:n-premia}, $(P_0, R_0, V)$ has a unique stationary distribution. The second condition in Assumption~\ref{asmp:stability} is taken for this stationary distribution.

{\it Step 4.} Let us show that $C$ from~\eqref{eq:C-main} is stationary. Apply the main result of \cite{Brandt} again and verify~\eqref{eq:Brandt}. In the notation of \cite{Brandt}, we have $A_n := 1 + aV(n) + bR_0(n)$ and $B_n := V(n)(m + \varepsilon(n))$. Therefore, Assumption~\ref{asmp:stability} ensures that $\mathbb E\ln|A_n| < 0$. We need to show that
\begin{equation}
\label{eq:Brandt-new}
\mathbb E\max(\ln|B_n|, 0) = \mathbb E\max(\ln|V(t)(m+\varepsilon(t))|, 0) < \infty.
\end{equation}
For any real number $c \ne 0$, we have:
\begin{equation}
\label{eq:ineq-0}
\ln|c| \le |c|.
\end{equation}
Applying~\eqref{eq:ineq-0} to the left-hand side of~\eqref{eq:Brandt-new}. 
\begin{equation}
\label{eq:logs}
\ln|V(t)(m+\varepsilon(t))| = \ln|m + \varepsilon(t)| + \ln V(t) \le |m+\varepsilon(t)| + \ln V(t).
\end{equation}
Next, for any real numbers $c_1, c_2$,  
\begin{equation}
\label{eq:ineq-1}
\max(c_1 + c_2, 0) \le \max(c_1, 0) + \max(c_2, 0) \le \max(c_1, 0) + |c_2|.
\end{equation}
From~\eqref{eq:ineq-1} and~\eqref{eq:logs}, the left-hand side of~\eqref{eq:Brandt-new} is dominated by
\begin{equation}
\label{eq:LHS}
\mathbb E\max(|m + \varepsilon(t)|, 0) + \mathbb E|\ln V(t)|.
\end{equation}
The innovations $\varepsilon$ have finite second moment by Assumption~\ref{asmp:innovations}. Therefore, 
\begin{equation}
\label{eq:eps}
\mathbb E\max(|m + \varepsilon(t)|, 0) < \infty.
\end{equation}
Next, $\ln V$ is governed by an autoregression of order 1 with innovations $W$ having finite second moment. Thus the stationary distribution of $\ln V$ also has finite second moment and therefore finite first moment. Together with~\eqref{eq:eps}, this proves that~\eqref{eq:LHS} is finite. This proves~\eqref{eq:Brandt-new} and with this the stationarity of $C$. 

{\it Step 5.} Further, $(V, R_0, P_0, C)$ is stationary. Together with~\eqref{eq:premia-main} and~\eqref{eq:return-main}, this proves stationarity of the truncated and the completed models:
\begin{equation}
\label{eq:ln-MC}
(\ln V, R_0, R)\quad \mbox{and}\quad (\ln V, R_0, R, P_0, P).
\end{equation}

{\it Step 6.} Finally, let us show ergodicity for the completed model. The transition function $\mathcal Q$ of this five-dimensional Markov chain is strictly positive: For any set $E \subseteq \mathbb R^5$ of positive Lebesgue measure, the transition probability $\mathcal Q(\mathbf{x}, E) > 0$ for every $\mathbf{x} \in \mathbb R^5$. Indeed, this transition probability is a push-forward of the distribution of the innovations in $\mathbb R^d$ under a certain smooth bijection $\mathbb R^5 \to \mathbb R^5$. And for every $\mathbf{x} \in \mathbb R^5$, this probability measure $\mathcal Q(\mathbf{x}, \cdot)$ and the Lebesgue measure on $\mathbb R^5$ are mutually absolutely continuous. For the rest of this proof, we refer the reader to the classic book \cite{MT-book} for terminology. It is straightforward to show that this Markov chain is irreducible and aperiodic. We have already shown it has a stationary distribution. Therefore, this Markov chain is positive Harris recurrent, and by \cite[Theorem 13.0.1]{MT-book}, this Markov chain is ergodic. 

{\it Step 7.} Exactly the same argument works for the truncated three-dimensional version, or for the relative size process in one dimension. Since we proved these stability and ergodicity results for~\eqref{eq:ln-MC}, they are all true if we replace $\ln V$ with $V$ in~\eqref{eq:ln-MC}. Indeed, the function $\ln : (0, \infty) \to \mathbb R$ is one-on-one and continuous in both sides. Such functions preserve stationary distributions and long-term convergence in total variation (which implies ergodicity) for Markov processes. 
\end{proof}

\section{Stochastic Portfolio Theory}

\subsection{Capital distribution curve} Consider the benchmark and $N$ portfolios $1, \ldots, N$. We model each pair (benchmark, portfolio $k$) with this model. The model must be the same for all $k$. We let $S_k(t)$ be the market capitalization (size) for the $k$th portfolio, and $S_0(t)$ the market size of the benchmark. We define {\it market weights} as 
\begin{equation}
\label{eq:market-weights}
\mu_{k}(t) = \frac{S_k(t)}{S_0(t) + S_1(t) + \ldots + S_N(t)},\quad k = 0, 1, \ldots, N.
\end{equation}
We can also rank them from top to bottom:
\begin{equation}
\label{eq:ranked-market-weights}
\mu_{(0)}(t) \ge \ldots \ge \mu_{(N)}(t).
\end{equation}
Market weights and portfolios based on them are the main topic of Stochastic Portfolio Theory in both discrete time, see \cite{Wong2, Wong1} and continuous time, see the classic book \cite{FernholzBook} and a more recent but also classic survey \cite{FKSurvey}. 

Define by $\delta_k$ and $\varepsilon_k$ the sequences of innovations for equity premium, as in~\eqref{eq:premia-main}, and price returns, as in~\eqref{eq:return-main}, for the $k$th portfolio, $k = 1, \ldots, N$:
\begin{align}
\label{eq:many}
\begin{split}
\frac{P_k(t)}{V(t)} &= M + AC_k(t) + (1 + BC_k(t))\frac{P_0(t)}{V(t)} + \delta_k(t);\\
\frac{R_k(t)}{V(t)} &= m + aC_k(t) + (1 + bC_k(t))\frac{R_0(t)}{V(t)} + \varepsilon_k(t);\\
R_k(t) &= \ln\frac{S_k(t)}{S_k(t-1)},\quad C_k(t) = \ln\frac{S_k(t)}{S_0(t)}.
\end{split}
\end{align}
Together with~\eqref{eq:ln-VIX},~\eqref{eq:n-return},~\eqref{eq:n-premia}, this is a time series Markov model for $2N+3$ series.

\begin{asmp}
The $2N+3$-dimensional vectors 
$$
(W(t), Z(t), U(t), \delta_1(t), \ldots, \delta_N(t), \varepsilon_1(t), \ldots, \varepsilon_N(t))
$$
are independent identically distributed with mean zero, finite second moment, with strictly positive Lebesgue density on $\mathbb R^{2N+3}$.  
\label{asmp:innov}
\end{asmp}

\begin{theorem}
Under Assumptions~\ref{asmp:stability},~\ref{asmp:innov}, the process of market weights $(\mu_{0}, \ldots, \mu_{N})$ from~\eqref{eq:market-weights} is ergodic. 
\label{thm:weights}
\end{theorem}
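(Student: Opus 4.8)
The plan is to deduce Theorem~\ref{thm:weights} from Theorem~\ref{thm:main}, exploiting two facts: the market weights are a smooth, invertible function of the relative sizes $C_1,\dots,C_N$, and the joint recursion for $(C_1,\dots,C_N)$ has exactly the affine structure of~\eqref{eq:C-main}, except that now a \emph{single} multiplicative coefficient — depending only on the benchmark — drives all $N$ coordinates, each carrying its own additive innovation. Concretely, putting $C_0(t)\equiv 0$ (so $e^{C_0}=S_0/S_0$), equation~\eqref{eq:market-weights} reads
$$
\mu_k(t)=\frac{e^{C_k(t)}}{1+\sum_{j=1}^N e^{C_j(t)}},\qquad k=0,1,\dots,N,
$$
so $(\mu_0,\dots,\mu_N)=\Phi(C_1,\dots,C_N)$ for a smooth map $\Phi$ of $\mathbb R^N$ onto the open unit simplex, with smooth inverse $(\mu_0,\dots,\mu_N)\mapsto(\ln(\mu_1/\mu_0),\dots,\ln(\mu_N/\mu_0))$; in particular $\Phi$ is a bimeasurable bijection. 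Repeating the computation~\eqref{eq:dc}--\eqref{eq:C-main} portfol}wise from the second line of~\eqref{eq:many} gives
$$
C_k(t+1)=\bigl(1+aV(t)+bR_0(t)\bigr)\,C_k(t)+V(t)\bigl(m+\varepsilon_k(t)\bigr),\qquad k=1,\dots,N,
$$
in which $A(t):=1+aV(t)+bR_0(t)$ is common to all $k$ and involves only the benchmark, while $B_k(t):=V(t)(m+\varepsilon_k(t))$ depends on $k$. Hence, by the same reasoning that showed the truncated model $(V,R_0,R)$ is Markov, the augmented process $\mathbf X(t):=(\ln V(t),R_0(t),C_1(t),\dots,C_N(t))$ is a time-homogeneous Markov chain, and the market-weight process is obtained by applying $\Phi$ to its $C$-coordinates.

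The ergodicity of $\mathbf X$ then follows by rerunning the proof of Theorem~\ref{thm:main}. For stationarity, $(A(t),B_1(t),\dots,B_N(t))$ is a fixed measurable image of the stationary sequence $(V(t),R_0(t),\varepsilon_1(t),\dots,\varepsilon_N(t))$ furnished by the analogue of Step~1 there; Assumption~\ref{asmp:stability} gives the contraction condition $\mathbb E\ln|A(t)|<0$, and the estimate in Step~2 of Theorem~\ref{thm:main}, applied with each $B_k$ in place of $B$, gives $\mathbb E\max(\ln|B_k(t)|,0)<\infty$ for every $k$ (the $\varepsilon_k$ have finite first moment by Assumption~\ref{asmp:innov}, and the stationary $\ln V$ has finite first moment). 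Since the recursion for $\mathbf C=(C_1,\dots,C_N)$ is the affine recursion $\mathbf C(t+1)=A(t)\mathbf C(t)+\mathbf B(t)$ with \emph{scalar} coefficient $A(t)$, the result of \cite{Brandt} applies coordinatewise and yields the a.s.\ convergent backward series $C_k(t)=\sum_{j\ge 0}\bigl(\prod_{i=1}^{j}A(t-i)\bigr)B_k(t-1-j)$, which is jointly stationary; hence $\mathbf X$ is stationary. For ergodicity, exactly as in Step~4 of Theorem~\ref{thm:main}, the one-step transition kernel of $\mathbf X$ is the push-forward of the innovation law under a map which is a local diffeomorphism in the relevant innovation coordinates $(W,U,\varepsilon_1,\dots,\varepsilon_N)$ — the Jacobian being block-triangular with nonzero diagonal entries $1,\,V(t+1),\,V(t),\dots,V(t)$ — so, together with the strictly positive joint density on $\mathbb R^{2N+3}$ from Assumption~\ref{asmp:innov}, the transition kernel is mutually absolutely continuous with Lebesgue measure; thus $\mathbf X$ is irreducible, aperiodic, and, having a stationary law, positive Harris recurrent, so ergodic by \cite[Theorem~13.0.1]{MT-book}. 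Finally, since $\mathrm{id}\times\Phi$ (acting as $\Phi$ on the $C$-block) is a bimeasurable bijection, the process $(\ln V,R_0,\mu_0,\dots,\mu_N)$ — and in particular the market-weight process $(\mu_0,\dots,\mu_N)$, in the same sense in which Theorem~\ref{thm:main} speaks of $C$ — inherits a unique stationary law, namely the push-forward of $\pi$, together with the total-variation convergence to it.

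The only point not already contained in Theorem~\ref{thm:main} is bookkeeping: checking that the $(2N+3)$-dimensional innovation enters the one-step map of $\mathbf X$ through a full-rank (submersion/diffeomorphism) Jacobian, so that the enlarged transition kernel still dominates and is dominated by Lebesgue measure in the appropriate dimension. I expect this to be routine, using only that $V(t)>0$ and that $W\mapsto\ln V$ is affine and bijective; all the genuinely probabilistic content — the logarithmic moment bounds and the appeal to \cite{Brandt} — is identical to Theorem~\ref{thm:main}, now applied simultaneously to $N$ coordinates that share one multiplicative driver.
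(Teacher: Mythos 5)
Your proof is correct and follows essentially the same route as the paper: show the vector $(C_1,\dots,C_N)$ is Markov and ergodic by rerunning the argument of Theorem~\ref{thm:main} (the affine recursion with common coefficient $1+aV(t)+bR_0(t)$ handled via \cite{Brandt}, plus the positive-density/Harris recurrence step), and then transfer ergodicity to the market weights through the continuous one-to-one map onto the simplex. You merely spell out details (the explicit map $\Phi$ and its inverse, the coordinatewise moment bounds, the Jacobian of the transition map) that the paper's much terser proof leaves implicit.
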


This is the main stability result. It has the meaning that if we have several portfolios, they stay in the long run as one {\it cloud}, and do not split into several {\it clouds}.

\begin{proof}
The relative size processes are time-homogeneous Markov and ergodic:
\begin{equation}
\label{eq:rel-sizes}
C_k(t) = \ln\frac{S_k(t)}{S_0(t)},\quad k = 1, \ldots, N.
\end{equation}
Their vector $(C_1, \ldots, C_N)$ is also time-homogeneous Markov and ergodic: Follows from Assumption~\ref{asmp:innov} in the same way as in Theorem~\ref{thm:main}. And there exists a one-to-one continuous mapping $(C_1, \ldots, C_N) \mapsto (\mu_0, \mu_1, \ldots, \mu_N)$, between $\mathbb R^N$ and the $N$-dimensional simplex 
$$
\{(m_0, \ldots, m_N) \in [0, \infty)^{N+1}\mid m_0 + \ldots + m_N = 1\}.
$$
Thus the process of market weights from~\eqref{eq:market-weights} is ergodic. 
\end{proof}

Thus the ranked market weights process also has a stationary distribution and converges to this distribution in the long run, regardless of the initial conditions. In this stationary distribution, we can plot these ranked market weights versus their ranks on the log scale:
$$
\left((\ln(n+1), \ln\mu_{(n)}(t)),\quad n = 0, \ldots, N\right)
$$
This plot is called the {\it capital distribution curve}. With real-world markets, this curve is linear on most span, and concave overall. Moreover, it shows remarkable long-term stability. See the famous picture in \cite[Chapter 4]{FernholzBook} for eight capital distribution curves at end of years 1929, 1939, \ldots, 1999; see the same picture as  \cite[Figure 13.4]{FKSurvey}. 

In our previous article \cite{CAPM-size}, we captured the observation that well-diversified portfolios of small stocks have higher risk but higher return than that of large stocks. We reproduced this shape of capital distribution curve in \cite{CAPM-size} using simulation. 

From~\eqref{eq:rel-sizes} and~\eqref{eq:market-weights}, we get a simple expression of $\mu_k(t)$ from $C_k(t)$:
$$
\ln\mu_k(t) = C_k(t) + \ln S_0(t) - \ln(S_0(t)+\ldots + S_N(t)),\, k  = 1, \ldots, N.
$$
Therefore, ranking market weights $\mu_k(t), k = 1, \ldots, N$ from top to bottom at any fixed time $t$ is equivalent to ranking relative size terms $C_k(t), k = 1, \ldots, N$ from top to bottom:  $C_{(1)}(t) \ge \ldots \ge C_{(N)}(t)$. Thus instead of plotting the (slightly modified) capital distribution curve $(\ln k, \ln\mu_{(k)}(t)), k = 1, \ldots, N$, we can plot $(\ln k, C_{(k)}(t)), k = 1, \ldots, N$. 

The linearity of the capital distribution curve was rigorously proved in  \cite{Chatterjee} for competing Brownian particles and disproved in \cite{Pal} for volatility-stabilized models. These two types of continuous-time models both capture the property that small stocks have higher risk but higher return than large stocks. But these models do not use CAPM.

\subsection{Reduction to normal order statistics} Fix a constant $k = 1, 2, \ldots$. Assume the market weights $\mu_n$, or, equivalently, relative size $C_n$ are in the stationary distribution, which (by \textsc{Theorem}~\ref{thm:weights}) is limiting distribution. To stress this, we write $t = \infty$ for time argument. Thus we have the ranked (sorted, ordered from top to bottom) values of the relative size process in the stationary distribution:
$$
C_{(1)}(\infty) > \ldots > C_{(N)}(\infty).
$$
Here, $N$ is the overall number of portfolios (excluding the benchmark). We are interested in the joint distribution of these sorted relative size values. Assume $\mathcal Z_1, \ldots, \mathcal Z_N \sim \mathcal N(0, 1)$ IID is the standard normal sample. 

\begin{theorem} If we assume that $\varepsilon_k$ are Gaussian and independent identically distributed, there exist random variables $\mathcal M$ and $\mathcal S > 0$ independent of the point process which are functions of two time series: $V$ and $Z$, and
$$
\left[\frac{C_1(\infty) - \mathcal M}{\mathcal S}, \ldots, \frac{C_N(\infty) - \mathcal M}{\mathcal S}\right] \stackrel{d}{=} (\mathcal Z_1, \ldots, \mathcal Z_N).
$$
\label{thm:normal-curve}
\end{theorem}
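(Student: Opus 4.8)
The plan is to exploit one structural feature of the relative-size recursion~\eqref{eq:C-main}: although the innovations $\varepsilon_k$ are distinct for the $N$ portfolios, the multiplicative coefficient $A(t) := 1 + aV(t) + bR_0(t)$ is the \emph{same scalar} for every $k$, being a function of the benchmark series $V$ and $R_0$ only. First I would recall, exactly as in the proof of \textsc{Theorem}~\ref{thm:main} via~\cite{Brandt}, that under Assumption~\ref{asmp:stability} the stationary version of $C_k$ admits the almost surely convergent backward series
\[
C_k(\infty) \;=\; \sum_{n\ge 0} \Pi_n\,V(-n-1)\bigl(m + \varepsilon_k(-n-1)\bigr),\qquad \Pi_0 := 1,\quad \Pi_n := \prod_{i=1}^{n}A(-i).
\]
Convergence (and, separately, convergence of the $m$-term and of the $\varepsilon_k$-term) follows from geometric decay of $|\Pi_n|$ --- Birkhoff's ergodic theorem applied to the ergodic stationary sequence $(A(-i))_{i\ge1}$ gives $n^{-1}\ln|\Pi_n|\to \mathbb E\ln|A(0)| < 0$ by~\eqref{eq:log-negative} --- balanced against the sub-exponential almost sure growth of $V(-n-1)$ and of $|\varepsilon_k(-n-1)|$, each a stationary (respectively i.i.d.) sequence with a finite moment, so divided by $n$ it tends to $0$.

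Splitting the series, write $C_k(\infty) = \mathcal M + D_k$ with the common benchmark-driven part $\mathcal M := m\sum_{n\ge 0}\Pi_n V(-n-1)$ and the portfolio-specific part $D_k := \sum_{n\ge 0} w_n\,\varepsilon_k(-n-1)$, where $w_n := \Pi_n V(-n-1)$. Now condition on $\mathcal F := \sigma\bigl(V(t),R_0(t) : t\in\mathbb Z\bigr)$. The weights $w_n$ and the quantity $\mathcal M$ are $\mathcal F$-measurable --- functions of the two series $V$ and $R_0$ --- while, under the standing hypothesis on the innovations used in \textsc{Section}~3 and in the simulation of Subsection~5.2 (namely $\varepsilon_k(t)$ i.i.d.\ $\mathcal N(0,1)$ across $k$ and $t$ and independent of the $W$- and $U$-components), the family $\{\varepsilon_k(-n-1): 1\le k\le N,\ n\ge 0\}$ is i.i.d.\ $\mathcal N(0,1)$ and independent of $\mathcal F$. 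Hence, conditionally on $\mathcal F$, the $D_k$ are i.i.d.\ $\mathcal N(0,\mathcal S^2)$ with $\mathcal S^2 := \sum_{n\ge 0} w_n^2$ --- the conditional variance obtained by monotone convergence on the partial sums --- and $\mathcal S^2 \ge w_0^2 = V(-1)^2 > 0$, so $\mathcal S > 0$ a.s.

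Finally, put $\mathcal Z_k := D_k/\mathcal S = (C_k(\infty)-\mathcal M)/\mathcal S$. By the previous paragraph the conditional law of $(\mathcal Z_1,\dots,\mathcal Z_N)$ given $\mathcal F$ is that of an i.i.d.\ $\mathcal N(0,1)$ sample for \emph{every} realization of $\mathcal F$; since this conditional law does not depend on $\mathcal F$, the vector $(\mathcal Z_1,\dots,\mathcal Z_N)$ is unconditionally an i.i.d.\ standard normal sample \emph{and} independent of $\mathcal F$, hence of the $\mathcal F$-measurable pair $(\mathcal M,\mathcal S)$. This is the asserted identity (in fact realized almost surely on the given probability space). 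Ranking then exhibits the (modified) capital distribution curve $(\ln k, C_{(k)}(\infty))$ as $(\ln k,\ \mathcal M + \mathcal S\,\mathcal Z_{(k)})$, i.e.\ as $\mathcal M + \mathcal S$ times the order statistics of a standard normal sample, which is precisely the conditionally-normal description announced before the theorem.

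The routine mechanics (the backward series, the ergodic-averaging bounds) duplicate what was done for \textsc{Theorem}~\ref{thm:main}. The step I expect to carry the real content, and to be the main obstacle, is the passage to the conditional distribution: one must justify termwise that the $\mathcal F$-conditioned series is a genuine Gaussian limit with the stated conditional mean and variance (interchanging the infinite sum with conditional expectation and variance), and --- the crucial point --- observe that after dividing by $\mathcal S$ the conditional law becomes \emph{free of} $\mathcal F$, which is what upgrades ``conditionally Gaussian'' to ``an i.i.d.\ standard normal sample independent of $(\mathcal M,\mathcal S)$''. It is worth flagging that Gaussianity of the $\varepsilon_k$ is genuinely needed here: under Assumption~\ref{asmp:innov} alone the conditional law of $(C_1(\infty),\dots,C_N(\infty))$ would be a random scale-and-shift of some fixed product law, not necessarily normal, and the reduction to normal order statistics would break down.
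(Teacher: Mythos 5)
Your proposal is correct and follows essentially the same route as the paper: expand the stationary $C_k(\infty)$ via Brandt's backward series with the common coefficients $A_n$ and portfolio-specific $B_n$, condition on the benchmark series, read off the conditional mean $\mathcal M$ and variance $\mathcal S^2$, and observe that the standardized variables are conditionally IID standard normal, hence unconditionally so and independent of $(\mathcal M,\mathcal S)$. Your explicit flag that Gaussianity of the $\varepsilon_k$ is genuinely needed (it is not guaranteed by Assumption~\ref{asmp:innov} alone) is a fair observation about a hypothesis the paper's proof uses only implicitly, but it does not change the fact that the argument is the same.
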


\begin{figure}[t]
\includegraphics[width = 14cm]{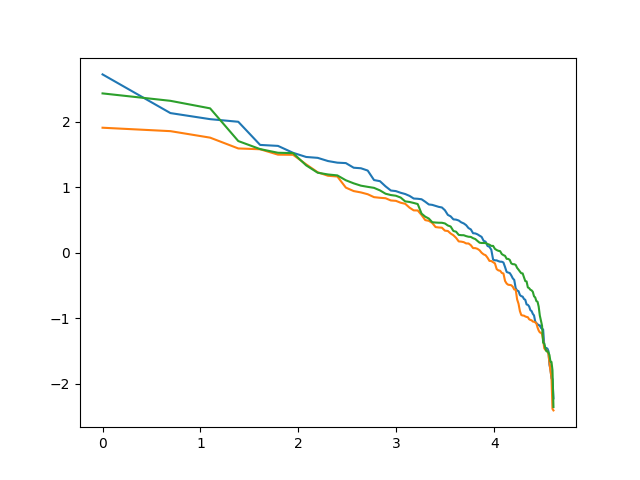}
\caption{Simulations of the standard normal market curve $(\ln k, \mathcal Z_{(k)})$ for $k = 1,\ldots, N$ if $\mathcal Z_1, \ldots, \mathcal Z_N \sim \mathcal N(0, 1)$ IID sample, for $N = 100$.}
\label{fig:normal-curve}
\end{figure}

\begin{proof}
Fix a $k = 1, \ldots, N$. Apply \cite[(0.6)]{Brandt} to express the stationary distribution of the relative size process $C_k$ from~\eqref{eq:C-main}: letting 
\begin{align}
\label{eq:A-B}
\begin{split}
A_n &:= 1 + aV(n) + bR_0(n) = 1 + aV(n) + bV(n)(g+U(n)),\\ 
B_n &:= V(n)(m + \varepsilon_k(n)),
\end{split}
\end{align}
pick any $n = 1, 2, \ldots$ and get:
\begin{equation}
\label{eq:C-A-B}
C_k(\infty) := B_n + A_nB_{n-1} + A_nA_{n-1}B_{n-2} + \ldots 
\end{equation}
We assume $V(n)$ and $U(n)$ are defined for all $n \in \mathbb Z$, not just $n \ge 0$. Assume $\sigma_{\varepsilon}$ is the standard deviation of innovations $\varepsilon_k$. From~\eqref{eq:A-B} and~\eqref{eq:C-A-B}, given time series $V$ and $Z$, the stationary distribution of $C_k(\infty)$ is Gaussian with mean and variance 
\begin{align*}
\mathcal M &:= m\left[V(n) + A_nV(n-1) + A_nA_{n-1}V(n-2) + \ldots\right],\\
\mathcal S^2 &:= \sigma^2_{\varepsilon}\left[V^2(n)  + A_n^2V^2(n-1) + A_n^2A_{n-1}^2V^2(n-2) + \ldots\right].
\end{align*}
Given time series $V$ and $Z$, the random variables $C_1(\infty), \ldots, C_N(\infty)$ are independent. Thus their standardized versions are (conditionally on $V$ and $Z$) IID standard Gaussian:
\begin{equation}
\label{eq:standardize}
\mathcal Z_n := \frac{C_n(\infty) - \mathcal M}{\mathcal S},
\quad n = 1, \ldots, N.
\end{equation}
\end{proof}

This motivates the following definition. 

\begin{defn} We define the {\it standard normal market curve} 
\begin{equation}
\label{eq:standard-normal-curve}
(\ln k, \mathcal Z_{(k)}),\quad k = 1,\ldots, N,
\end{equation}
for standard normal sample $\mathcal Z_1, \ldots, \mathcal Z_N$ and its ranked version $\mathcal Z_{(1)} \ge \ldots \ge \mathcal Z_{(N)}$.
\end{defn}

The significance of Theorem~\ref{thm:normal-curve} is as follows. Assume we rank $\mathcal Z_{(1)} > \ldots > \mathcal Z_{(N)}$. An immediate consequence of Theorem~\ref{thm:normal-curve} is that the (slightly modified) capital distribution curve $(\ln k, C_{(k)})$ for $k = 1, \ldots, N$ has almost the same shape as the standard normal market curve. The only difference is a shift and change in slope, both random but independent of the standard normal market slope itself. Multiplication by $\mathcal S$ preserves ordering, since $\mathcal S > 0$. We plotted three simulations of such curve in \textsc{Figure}~\ref{fig:normal-curve}. The Python code for this simulation in \textsc{Figure}~\ref{fig:normal-curve} is available on \texttt{GitHub} repository \texttt{asarantsev/size-capm-vix}, file \texttt{normal-sample-simulation.py} which reproduces the real-world shape of the curve. The changes in random constants $\mathcal M$ and $\mathcal S$ make this curve shift and change its slope. But they do not influence its overall shape. 

\subsection{Capital distribution curve from the standard normal distribution} It remains to study the behavior of the curve~\eqref{eq:standard-normal-curve} using the classic Extreme Value Theory. Most of the definitions and results of this subsection are well-known. We do not even attempt to provide an exhaustive list of references. Instead, we mention classic monograph \cite{Resnick-Book} and a classic textbook \cite{Extreme-Book}. For Poisson point processes on the real line, see another classic monograph \cite{Kingman}. Pick a function $\lambda : \mathbb R \to [0, \infty)$ which satisfies 
$$
\Lambda(u) := \int_{u}^{\infty}\lambda(x)\,\mathrm{d}x < \infty,\quad u \in \mathbb R;\quad \Lambda(-\infty) = \infty.
$$

\begin{defn}  A {\it Poisson point process} on $\mathbb R$ with {\it intensity} or {\it rate} $\lambda$ is defined as a decreasing sequence of random numbers $\mathcal N^+_1 > \mathcal N^+_2 > \ldots$ such that the (random) number of points on any interval $[a, b]$ has the Poisson distribution with mean $\int_a^b\lambda(u)\,\mathrm{d}u$. 
\end{defn}

For example, take $\lambda(t) = e^{-t}$. Denote by $\tau_k$ the $k$th jump time of the standard Poisson process on $[0, \infty)$: $\tau_k - \tau_{k-1}$ are IID exponential with mean 1, where by convention $\tau_0 := 0$. For fixed $m$, we can also express 
(see  \cite[Chapter 8, Exercises 6 and 7]{Extreme-Book}):
\begin{equation}
\label{eq:rescaling+}
\mathcal N^+_k = -\ln(\tau_k),\quad k = 1, \ldots, m.
\end{equation}

\begin{defn} The (standard) {\it Gumbel distribution} $G$ is defined by its cumulative distribution function $\exp(-\exp(-x))$.
\end{defn}

It is known, see classic references \cite[Theorem 8.3.1, Example 8.3.4]{Extreme-Book} or \cite[Chapter 1]{Resnick-Book}, that the normal distribution belongs to the {\it Gumbel domain of attraction:} Consider the maximum of $n$ IID normal variables:
$$
M_n = \max(X_1, \ldots, X_n),\quad X_1, X_2, \ldots \sim \mathcal N(0, 1)\ \mbox{IID.}
$$
After scaling, this maximum converges weakly to Gumbel distribution as $n \to \infty$:
\begin{equation}
\label{eq:Gumbel}
\frac{M_n - b_n}{a_n} \xrightarrow{d} G,\quad n \to \infty,
\end{equation}
where $a_n > 0$ and $b_n$ are suitable constants. A common suggestion is 
$$
a_n = \frac{1}{\sqrt{2\ln n}},\quad b_n = \sqrt{2\ln n} - \frac{\ln(4\pi\ln n)}{2\sqrt{2\ln n}}.
$$
But the convergence rate for this choice of constants is not the best, as shown in \cite[Example 8.3.4]{Extreme-Book}. There are ways to improve this, for example \cite{Hall}:
$$
n\varphi(b_n) = b_n,\quad a_n = 1/b_n,\quad \varphi(u) := \frac1{\sqrt{2\pi}}\exp\left[-\frac{u^2}2\right].
$$
For any such sequences $(a_n)$ and $(b_n)$, we have convergence of top $k$ ranked standardized variables $X_{(1)} > \ldots > X_{(k)}$ to the rightmost $k$ points of $\mathcal N$, as $N \to \infty$:
\begin{equation}
\label{eq:multiple-conv+}
\left[\frac{X_{(1)} - b_N}{a_N}, \ldots, \frac{X_{(k)} - b_N}{a_N}\right] \xrightarrow{d} \left[\mathcal N^+_1, \ldots, \mathcal N^+_k\right].
\end{equation}
Similarly and symmetrically, as $N \to \infty$, 
\begin{equation}
\label{eq:multiple-conv-}
\left[\frac{X_{(N)} + b_N}{a_N}, \ldots, \frac{X_{(N-k+1)} + b_N}{a_N}\right] \xrightarrow{d} -\left[\mathcal N^+_1, \ldots, \mathcal N^+_k\right].
\end{equation}
This convergence~\eqref{eq:multiple-conv+} or~\eqref{eq:multiple-conv-} follows from \cite[Theorem 8.4.2]{Extreme-Book}, or \cite{Miller}, or \cite[Chapter 4]{Resnick-Book}; see also \cite[Chapter 8, Exercises 6--8]{Extreme-Book} and compare with~\eqref{eq:rescaling+}. Recall~\eqref{eq:rescaling+} and plot this Poisson point process with $x$-axis log scale. This represents (up to shift by $b_N$ and rescaling by $a_N$) the left upper end of the capital distribution curve:
\begin{equation}
\label{eq:standard-curve+}
(\ln k, \mathcal N^+_k) = (\ln k, -\ln(\tau_k)),\, k = 1, 2, \ldots, m.
\end{equation}

\begin{figure}[t]
\subfloat[Upper Left End]{\includegraphics[width = 8cm]{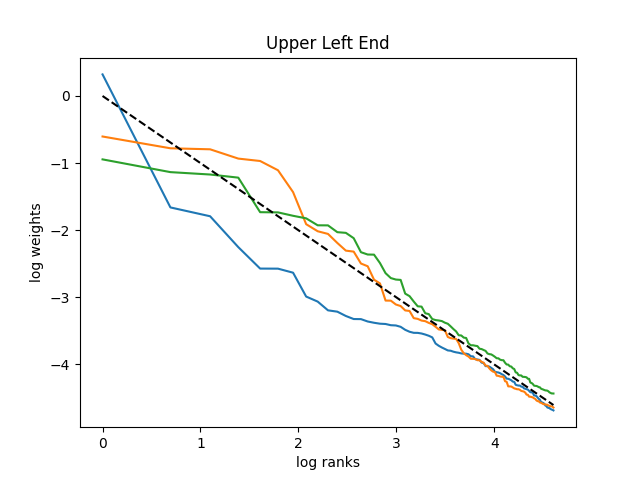}}
\subfloat[Lower Right End]{\includegraphics[width = 8cm]{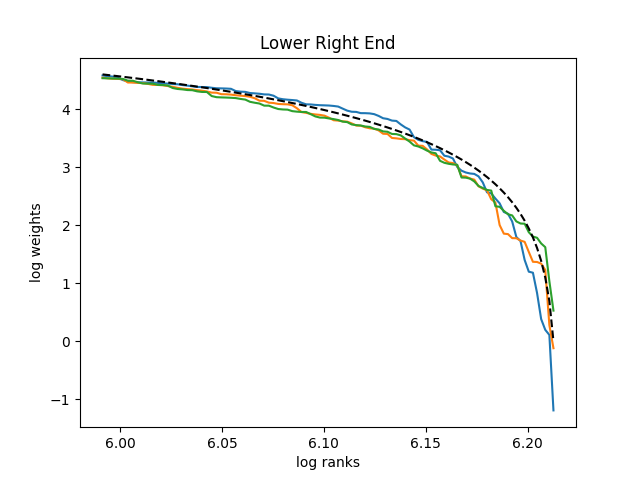}}
\caption{Simulated capital distribution curves. Left panel: $(-\ln k, -\ln\tau_k)$ for $k = 1, \ldots, 100$ and $y = -x$ for $x \in [0, \ln100]$. Right panel: $(\ln(501-k), -\ln\tau_k)$ for $k = 1, \ldots, 100$ and $y = \ln(500 - e^x)$ for $x \in [\ln 401, \ln 500]$. For each panel, 3 simulations (solid) and the deterministic function (dotted).}
\label{fig:ends-sim}
\end{figure}

The next lemma is the key to our analysis of both ends of the standard normal curve~\eqref{eq:standard-normal-curve}. 

\begin{lemma} 
With probability $1$, the sequence $|\ln\tau_k - \ln k|,\, k = 1, 2, \ldots$ is bounded.
\label{lemma:diff-logs}
\end{lemma}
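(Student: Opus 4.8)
The plan is to show that the partial sums $\ln\tau_k = \sum_{j=1}^{k}(\ln\tau_j - \ln\tau_{j-1})$ do not stray far from $\ln k = \sum_{j=1}^{k} \ln(j/(j-1))$ (interpreting the $j=1$ term sensibly, e.g.\ starting the comparison at $k=1$ with $\ln\tau_1 = \ln\tau_1$ an exponential). By Lemma~\ref{lemma:diff-logs}, the increments $D_j := \ln\tau_{j} - \ln\tau_{j-1}$ for $j \ge 2$ are exponential with mean $1/(j-1)$, and $D_1 = \ln\tau_1$ is exponential with mean $1$. The deterministic comparison increments are $\ln j - \ln(j-1) = \ln(1 + 1/(j-1)) \approx 1/(j-1)$, which matches $\mathbb E[D_j]$ to leading order; so the natural object to control is the centered sum $\ln\tau_k - \ln k = \sum_{j=1}^{k}\bigl(D_j - \mathbb E[D_j]\bigr) + \sum_{j=2}^{k}\bigl(\tfrac{1}{j-1} - \ln(1+\tfrac1{j-1})\bigr)$. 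The second sum is deterministic and absolutely convergent (its terms are $O(j^{-2})$), so it is bounded; the real work is the first sum.

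First I would establish that the increments $D_j$ are \emph{independent}: since $\tau_k$ and $\tau_{k+1}-\tau_k$ are independent, and $D_{k+1} = \ln(1 + (\tau_{k+1}-\tau_k)/\tau_k)$ while $D_1, \ldots, D_k$ are measurable functions of $\tau_1, \ldots, \tau_k$ (equivalently of the first $k$ interarrival gaps), one checks by induction that $D_{k+1}$ is independent of $(D_1, \ldots, D_k)$; alternatively this is immediate from the Rényi representation, under which $D_j$ for $j\ge 2$ equals $E_{j-1}/(j-1)$ in distribution jointly, with $E_1, E_2, \ldots$ IID exponential$(1)$. Using that representation, write $M_k := \sum_{j=1}^{k}(D_j - \mathbb E[D_j])$; then $(M_k)$ is a martingale with independent increments, and $\sum_j \operatorname{Var}(D_j) = \sum_{j\ge 2} 1/(j-1)^2 < \infty$. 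By the Kolmogorov one-series (or two-series / martingale convergence) theorem, $M_k$ converges almost surely to a finite limit. A convergent sequence is bounded, hence $M_k$ is a.s.\ bounded; combined with the boundedness of the deterministic correction sum, this gives $\sup_k |\ln\tau_k - \ln k| < \infty$ almost surely, which is the claim.

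One technical point to handle carefully is the indexing at $k=1$: with $\tau_0 = 0$ by convention, $\ln\tau_0 = -\infty$, so the telescoping must start from $\tau_1$, and $\ln k = \ln 1 = 0$ at $k=1$ while $\ln\tau_1$ is an honest exponential$(1)$ variable — this contributes one bounded random term and causes no trouble. I would also double-check the tail estimate $\ln(1+x) \le x$ to bound $|D_j|$ pathwise if a union-bound argument is preferred over the series theorem; this gives an alternative route via Borel--Cantelli showing $|D_j - \mathbb E D_j| \le \varepsilon_j$ eventually for summable $\varepsilon_j$, but the martingale/Kolmogorov argument is cleaner. The main obstacle, such as it is, is simply making the independence of the log-increments airtight — it is intuitively clear from the Gamma/exponential structure but deserves an explicit appeal to the Rényi (order-statistics) representation of $(\tau_1, \ldots, \tau_k)$ rather than a hand-wave; once that is in place, the rest is the standard ``variances summable $\Rightarrow$ a.s.\ convergent $\Rightarrow$ bounded'' chain.
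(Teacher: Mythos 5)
Your argument is correct and essentially the same as the paper's: both center the independent log-increments $\ln\tau_j-\ln\tau_{j-1}$ (independence which the paper takes from a textbook exercise and you from the R\'enyi representation), use the summable variances $1/(j-1)^2$ from Lemma~\ref{lemma:diff-logs} together with a Kolmogorov-type almost-sure convergence theorem (the paper cites \cite[Theorem 1.4.2]{Stroock}), and absorb the bounded deterministic discrepancy between the summed means and $\ln k$. Your only slip is harmless: $\ln\tau_1$ is not exponential but the logarithm of an exponential variable (mean $-\gamma$, variance $\pi^2/6$), which still contributes a single almost surely finite term of finite variance, so the conclusion is unaffected.
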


\begin{proof} From definitions of $\tau_k$, we have: $\tau_k$ and $\tau_{k+1} - \tau_k$ are independent; $\tau_k$ has Gamma distribution (sum of $k$ IID exponential random variables with mean $1$); $\tau_{k+1} - \tau_k$ is another exponential random variable with mean $1$. By the Strong Law of Large Numbers, $\tau_k/k \to 1$ almost surely. Taking the logarithms, we complete the proof. 
\end{proof}

Thus we can replace $\ln\tau_k$ in~\eqref{eq:standard-curve+} with $\ln k$. This result allows us to approximate the curve~\eqref{eq:standard-curve+} with continuous functions. The upper-left end of the curve~\eqref{eq:standard-curve+} then becomes $(\ln k, -\ln k)$. This is a linear curve with 45 degrees of incline. Next, the lower-right end of the curve becomes $(\ln k, \ln(N+1-k))$. Let $x = \ln k$ and $y = \ln(N+1-k)$: Then
$$
y = \ln(a - e^x),\quad a := N+1.
$$
This function is concave but obviously not linear: 
$$
y'' = \left(\frac{(a-e^x)'}{a - e^x}\right)' = \left(1 + \frac{a}{e^x-a}\right)' = -\frac{ae^x}{(e^x-a)^2} < 0.
$$
In light of all this, we reproduce the shape of this capital distribution curve. See \textsc{Figure}~\ref{fig:ends-sim} for the upper left and lower right ends of the capital distribution curve. The simulation of $\ln\tau_k$ for $k = 1, \ldots, 100$ and the plot of the comparative deterministic function (with $N = 500$) was done in Python. The code \texttt{standard-curve-simulation.py} is in \texttt{GitHub} repository \texttt{asarantsev/size-capm-vix} The shapes of both ends of the curve in \textsc{Figure}~\ref{fig:normal-curve} are reproduced here in \textsc{Figure}~\ref{fig:ends-sim}.

\section{Conclusion and Further Research} 

We combined main results of our previous articles \cite{CAPM-size, VIX}. In \cite{CAPM-size}, the main model (CAPM plus linear dependence of $\alpha$ and $\beta$ upon relative size) was used to capture the property that small stocks have, on average, higher risks and returns. In this article, we add stochastic volatility to this model. A few, but not all, portfolios have the following property: Their linear regression equations (combining CAPM and size) have residuals closer to IID Gaussian if we include VIX as the multiplicative term in the equation. We state and prove a simple sufficient condition (\textsc{Theorem}~\ref{thm:main}) for ergodicity. We make some connections with Stochastic Portfolio Theory, adding our model to the collection of proposed models capturing this size effect. We state and prove rigorous results on the capital distribution curve. We fill two lacunas in our previous research from \cite{CAPM-size}. 

{\it First lacuna:} Long-term stability results apply to the case of constant volatility from \cite{CAPM-size}, but not to the case of stochastic volatility studied here. 

{\it Second lacuna:} The curve is the order statistics of the normal sample (but with random mean and standard deviation). Such curve reproduces the real-world shape of the capital distribution curve: linear at the upper left end, and concave at the lower right end. We investigated this only here and not in \cite{CAPM-size}.

For future research, we could fit non-normal distributions for innovation series $W, \delta, \varepsilon$, and check conditions of \textsc{Theorem}~\ref{thm:main} for the resulting distributions. Also, we could include the {\it value effect:} Stocks priced cheaply to fundamentals (earnings, dividends, book value) tend to outperform other stocks, on average. We would include, for example, dividend yield as a factor in $\alpha$ and $\beta$ from CAPM. To make the model complete, we need to model dividend yield separately (for example, as an autoregression). 

\end{document}